\newtheorem{theorem}{Theorem}[section]
\newtheorem{proposition}[theorem]{Proposition}
\theoremstyle{definition}
\numberwithin{equation}{section}
\begin{document}

\title[Stationary solutions of NLSE in supercritical dimensions]{%
  Stationary solutions of semilinear Schr\"{o}dinger
  \\[1mm] equations with trapping potentials
  \\[1mm] in supercritical dimensions}

\author[F. Ficek]{Filip Ficek}
\address{Institute of Theoretical Physics, Jagiellonian University,
  \L{}ojasiewicza 11, 30-348 Krak\'{o}w, Poland} 
  \email{ficekf@gmail.com}  

\thanks{This research was funded by the Polish National Science Centre within Grants No. 2020/36/T/ST2/00323 and No. 2017/26/A/ST2/00530.}

\begin{abstract}
  Nonlinear Schr\"{o}dinger equations are usually investigated with the use of the variational methods that are limited to energy-subcritical dimensions. Here we present the approach based on the shooting method that can give the proof of existence of the ground states in critical and supercritical cases. We formulate the assumptions on the system that are sufficient for this method to work. As examples, we consider Schr\"{o}dinger-Newton and Gross-Pitaevskii equations with harmonic potentials.
\end{abstract}

\keywords{nonlinear Schr\"{o}dinger equation, stationary solutions, supercritical dimensions, shooting method}

\subjclass{34B15, 34B18, 35Q55}

\maketitle


\section{Introduction} \label{S:intro}
The most common approach in the study of nonlinear Schr\"{o}dinger equations (NLSE) is based on the variational methods. However, since these methods rely on some compactness results, they cease to work in energy-supercritical dimensions. From the application point of view, it does not seem to pose a great problem because such equations are usually used to describe various quantum-mechanical systems that are at most three-dimensional. An example of such NLSE is the Schr\"{o}dinger-Newton-Hooke  equation (SNH) that describes a self-gravitating quantum gas in a harmonic trap:
\begin{align}
\begin{cases}\label{eqn:SNHt}
    i\,\partial_t\psi&=-\Delta \psi+|x|^2\psi+v\psi,\\
    \Delta v&=|\psi|^2,
\end{cases}
\end{align}
where $\psi$ is the wavefunction and the nonlinearity is introduced by the gravitational potential $v$. In \cite{Biz18} the authors showed that this system can be also obtained as a nonrelativistic limit of the perturbations of the anti-de Sitter spacetime. This result connects it to one of the most important open problems in mathematical general relativity, the stability of anti-de Sitter spacetime \cite{Biz13}, and gives a motivation to investigate Eq.\ (\ref{eqn:SNHt}) in higher dimensions.

The literature regarding NLSE with trapping potentials (potentials that diverge to $\infty$ as $|x|\to\infty$) in supercritical dimensions is rather scarce. Existence of a whole family of stationary solutions (solutions satisfying the ansatz $\psi(t,x)=e^{-i\omega t}u(x)$, where $\omega$ is some real value called the frequency and $u$ is a real function vanishing at infinity) of SNH was shown in \cite{Fic21}. The only other similar system that was investigated in supercritical dimensions was the Gross-Pitaevskii equation with a harmonic potential (GP) \cite{Biz21, Pel, Sel11, Sel12, Sel13}:
\begin{align}\label{eqn:gpt}
    i\, \partial_t \psi=-\Delta \psi + |x|^2 \psi-|\psi|^2 \psi.
\end{align}

The goal of this short paper is to formulate a common framework that can be used for various semilinear Schr\"{o}dinger equations with trapping potentials in supercritical dimensions. In Section \ref{S:main} we describe our approach, which is based on a reduction to the ordinary differential equation and application of the shooting method. We state the necessary assumptions and prove the existence of ground states for systems satisfying them. Section \ref{S:app} shows how this result can be applied to SNH and GP equations. Finally, in Section \ref{S:conclusions} we outline the possible extensions and future prospects.

\section{Main result} \label{S:main}
Since we are interested in stationary solutions, the initial Schr\"{o}dinger equation is reduced to a nonlinear elliptic equation with a trapping potential. Ground states of such equations are usually spherically symmetric \cite{Li93}, letting us to write down the problem as an ODE, typically having the form
\begin{align}\label{eqn:NLS}
    -u''-\frac{d-1}{r} u'+V(r)u-F(r,u(r))=\omega\, u,
\end{align}
where $r=|x|$, $u(r)$ is the solution we seek, $V$ is the trapping potential (i.e.\ $\lim_{|x|\to\infty}V(x)=\infty$), and $F$ denotes the nonlinearity. When looking for stationary solutions, one usually specifies some characteristics of the sought solution, e.g., its frequency or mass. Here we will be looking for ground states $u$ with some fixed value in the center of the symmetry $u(0)=b>0$. Regularity of the solution implies $u'(0)=0$. Since $u$ is a ground state, we also require that $\lim_{r\to\infty}u(r)=0$ and $u(r)>0$. Our final goal is to prove that there is such $\omega$ that there exists a solution $u$ satisfying the conditions above and Eq.\ (\ref{eqn:NLS}) -- the ground state with frequency $\omega$.

In principle, one could now try to employ the shooting method with $\omega$ as the shooting parameter. However, sometimes one has a better control on some other quantity related to $\omega$ that we will denote by $c$. Let us then rewrite the above equation as the following initial value problem
\begin{align}
\begin{cases}\label{eqn:init}
    u''+\frac{d-1}{r} u'-V(r)u+F_c(r,u(r))=0,\\
    u(0)=b>0,\quad u'(0)=0,
\end{cases}
\end{align}
where $F_c$ contains the nonlinearity and depends continuously on some parameter $c$. One can easily show that the singularity at $r=0$ present in this equation does not pose any problem and all classical results regarding existence, uniqueness, and continuous dependence of the solutions still hold \cite{Has}. Hence, for any fixed value of $c$, we get some function $u_c(r)$ with its maximal domain $[0,R_c)$, where $R_c$ may be infinite. 

The proof of the existence of the ground states for SNH presented in \cite{Fic21} is following this line of action and then relies on the analysis of the behaviour of $u_c$ as $c$ changes. For the sake of generality, it may be convenient to perform here such analysis in isolation from the initial ODE-based context. We can just see the set of solutions of Eq.\ (\ref{eqn:init}) for various $c$ as the family of functions $\{u_c\}$ depending continuously on a single parameter $c$ (their derivatives $u_c'$ also depend continuously on $c$ since they are solutions to the second order ODE). Assume then that this family satisfies the following six conditions.
\begin{enumerate}
    \item[(A1)] There is a value of $c$ such that the function $u_c$ has $r_0$ at which $u_c(r_0)=0$ while $u_c(r)>0$ and $u_c'(r)<0$ for $r\in(0, r_0)$.
    \item[(A2)] The function $u_0$ is positive.
    \item[(A3)] For any $c$, if at some point $r_0$ it holds $u_c(r_0)=u_c'(r_0)=0$, then $u$ is identically zero.
    \item[(A4)] Functions $u_c$ cannot have an inflection point while they are positive and decreasing.
    \item[(A5)] It holds $u_c''(0)<0$ for $c>0$.
    \item[(A6)] For any $c$, it either holds $\lim_{r\to R_c}u_c(r)=\infty$, $\lim_{r\to R_c}u_c(r)=-\infty$, or $\lim_{r\to R_c}u_c(r)=0$, where $R_c$ may be infinite.
\end{enumerate}
As we prove now it leads to the existence of such $c_0$ that $u_{c_0}$ is the ground state of our problem. It means that for a generic problem such as (\ref{eqn:init}), it is enough to check whether the solutions satisfy these conditions to show that the ground state exists. This is the approach we employ in Section \ref{S:app}.

\begin{theorem} \label{T:1}
Let $\{u_c|c\geq 0\}$ be a family of at least twice differentiable functions with domains $[0,R_c)$ satisfying $u_c(0)=b>0$ and $u_c'(0)=0$. Let the values of $u_c$ and $u_c'$ depend pointwise-continuously on $c$. Then if this family satisfies (A1)-(A6), there exists $c_0$ such that $u_{c_0}$ is a positive function on domain $[0,\infty)$ and decreasing to zero at infinity.
\end{theorem}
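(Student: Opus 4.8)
The plan is to run a shooting argument in the parameter $c$, organising $[0,\infty)$ according to the trichotomy (A6). I would set
\[
S_-=\{c\ge 0 : u_c(r)<0\text{ for some }r\in[0,R_c)\},\qquad S_+=\{c\ge 0 : u_c>0\text{ on }[0,R_c)\text{ and }\lim_{r\to R_c}u_c=+\infty\},
\]
and call the complement $G=[0,\infty)\setminus(S_-\cup S_+)$ the set of ground-state candidates; by (A6) together with positivity, $c\in G$ precisely when $u_c>0$ and $\lim_{r\to R_c}u_c=0$. The proof then reduces to three claims: (i) $S_-$ and $S_+$ are open and disjoint; (ii) $G\neq\emptyset$; (iii) for $c_0\in G$ the function $u_{c_0}$ is strictly positive and monotonically decreasing to $0$. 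Disjointness is immediate, and connectedness of a suitable interval will deliver (ii) from (i) once the two sets are anchored at its endpoints.

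The easy half is $S_-$. Pointwise continuity of $c\mapsto u_c$ makes $S_-$ open, since a strict sign $u_c(r)<0$ at a fixed $r$ persists under small changes of $c$. It is nonempty by (A1): at the $c$ provided there, $u_c(r_0)=0$ with $u_c>0$ on $(0,r_0)$, and a double zero is excluded by (A3) (it would force $u_c\equiv 0$, contradicting $u_c(0)=b>0$), so $u_c'(r_0)<0$ and $u_c$ becomes negative just past $r_0$. By (A2) we have $0\notin S_-$. I would then split on the point $c=0$: if $0\notin S_+$ as well, then $c_0=0\in G$ and only (iii) remains; otherwise $0\in S_+$, and choosing $c_1\in S_-$ from (A1), openness of $S_-$ and $S_+$ together with connectedness of $[0,c_1]$ force a point $c_0\in(0,c_1)$ lying in neither set, i.e.\ $c_0\in G$.

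The main obstacle is the openness of $S_+$: divergence to $+\infty$ must be shown to be stable under perturbation of $c$, and this is exactly where the shape hypotheses (A4)--(A5) enter. Assuming $u_{c_0}\to+\infty$, I would fix a finite $r^*$ at which $u_{c_0}(r^*)$ is large and $u_{c_0}'(r^*)>0$; by pointwise continuity every nearby $u_c$ also has a large positive value and positive slope at $r^*$, and (A4)--(A5) prevent such a positive, growing branch from reversing (a turn-around would introduce an inflection on a positive arc), so by the trichotomy (A6) these $u_c$ can only diverge to $+\infty$ as well. Hence a neighbourhood of $c_0$ stays in $S_+$. Granting (i) and (ii), it remains to verify (iii) for $c_0\in G$. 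Since $c_0\notin S_-$ we have $u_{c_0}\ge 0$; if $u_{c_0}(\rho)=0$ at an interior point, then $\rho$ is a minimum, so $u_{c_0}'(\rho)=0$, and (A3) forces $u_{c_0}\equiv 0$, impossible, whence $u_{c_0}>0$ on $[0,\infty)$. As $c_0\notin S_+$, (A6) gives $\lim_{r\to\infty}u_{c_0}=0$, which in particular shows $R_{c_0}=\infty$. For monotonicity, (A5) makes $u_{c_0}$ decrease immediately at the origin, while an interior minimum at a positive value would again require an inflection on a positive, decreasing arc, contradicting (A4); therefore $u_{c_0}$ decreases on all of $[0,\infty)$ to its limit $0$. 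The points where I expect to spend the most effort are the stability of blow-up (openness of $S_+$) and the borderline case $c_0=0$, where (A5) is unavailable and monotonicity of $u_0$ must instead be extracted from (A2), (A4) and (A6).
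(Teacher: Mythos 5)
Your treatment of $S_-$ is sound (openness from pointwise continuity, nonemptiness from (A1) plus the transversality forced by (A3), and $0\notin S_-$ from (A2)), and your step (iii) matches what the paper does at the end. The fatal problem is the claim that $S_+$ is open: it is both unproven by your argument and actually false in the only case where you need it. Your mechanism --- ``a turn-around would introduce an inflection on a positive arc'' --- misreads (A4): that assumption forbids inflection points only while $u_c$ is positive \emph{and decreasing}. A solution that is large and increasing at $r^*$ can pass through a local maximum and then descend concavely through zero; along the entire positive decreasing part one can have $u_c''\le 0$ throughout, so no forbidden inflection occurs, and (A6) happily lets such a neighbour land in $S_-$ rather than $S_+$. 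In fact the situation is worse. By (A5) every $u_c$ with $c>0$ starts out decreasing, and if it ever had a stationary point while still positive, then $u_c''$ would have to be negative somewhere near the origin and positive somewhere just before that stationary point, hence change sign on a positive decreasing arc, violating (A4). So for $c>0$ the configuration $u_c(r^*)>0$, $u_c'(r^*)>0$ can never occur, which gives $S_+\subseteq\{0\}$ --- and the singleton $\{0\}$ is not open in $[0,\infty)$. Consequently your claim (i) fails exactly in the branch $0\in S_+$ of your case split, and there the connectedness argument collapses: the decomposition $[0,c_1]=\{0\}\cup\bigl(S_-\cap[0,c_1]\bigr)$, with the second set open in $[0,c_1]$, contradicts nothing, so no point of $G$ is produced.

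This is precisely where the paper's organisation differs and why it survives. The paper sets $c_0=\inf I$, where $I$ is (in view of the structural fact above) essentially your $S_-$, and it only ever needs \emph{zero-crossing} to be stable under perturbation of $c$ --- which is true: by (A3) the crossing of $u_{c_0}$ is transversal, pointwise continuity at two points gives a sign change for all nearby $u_c$, and for $c>0$ positivity forces monotone decrease up to the first zero, so every nearby $c$ (including some $c<c_0$) lies in $I$, contradicting minimality. Divergence to $+\infty$ at $c_0$ itself is then excluded directly by the shape assumptions ((A5) initial decrease plus (A4) ruling out a positive minimum), not by any ``openness of blow-up'' statement. To salvage your scheme you would have to replace openness of $S_+$ by a direct proof that $0\in S_+$ and $\inf S_-=0$ are incompatible, e.g.\ showing that along $c_n\in S_-$, $c_n\to 0$, the first zeros $z_n$ escape to infinity, so that $u_0$ is a pointwise limit of functions decreasing on exhausting intervals, hence non-increasing and bounded by $b$, contradicting $u_0\to\infty$. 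Note that forcing $z_n\to\infty$ genuinely requires continuity in $c$ that is locally uniform in $r$ (with merely pointwise continuity nothing prevents $u_{c_n}$ from plunging through zero on a shrinking interval and climbing back up); this is the same delicate point the paper itself compresses into ``similar analysis also applies to the case $c_0=0$''.
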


\begin{proof}
	Let us introduce a set of values of parameter $c$ defined by the behaviour of $u_c$:
	\begin{align*}
	    I=\left\{c\geq 0\,|\, \exists\, r_0>0: u_c(r_0)=0 \mbox{ while } u_c(r)>0 \mbox{ and } u_c'(r)<0 \mbox{ for } r\in(0, r_0)\right\}.
	\end{align*}
	Assumption (A1) tells us that this set is not empty, so $c_0=\inf I$ is finite, while (A2) implies that $0\not\in I$. We claim that $u_{c_0}$ is the sought function. The main tool in this proof will be the continuous dependence of $u_c$ and $u_c'$ on the parameter $c$.
	
	Assume $c_0>0$ for now. If $u_{c_0}$ crosses zero at some point, let us denote the first such occurrence by $r_0$. Then $u_{c_0}$ must do it transversally due to (A3). It means that there exists $U$ -- a neighbourhood of $c_0$ such that for all values of $c$ in it, $u_c$ is also crossing zero. Additionally, since $c_0=\inf I$, thanks to (A4) and (A5) the functions $u_c$ for every $c$ in $U$ must be decreasing up to the crossing with zero (because no new stationary point may appear between $r=0$ and the first crossing as $c$ slightly changes). It means that $U\subset I$ and as a result $c_0$ cannot be the infimum of $I$. The fact that $u_{c_0}$ cannot cross zero rules out the possibility that $\lim_{r\to\infty}u_{c_0}(r)=-\infty$.
	
	Let us now assume that $\lim_{r\to\infty}u_{c_0}(r)=\infty$. Condition (A5) tells us that $u_{c_0}$ is initially decreasing, so there must be a point where $u_{c_0}$ has the first positive minimum. This time the continuous dependence of $u'_{c_0}$ on $c$ tells us that for some small neighbourhood of $c_0$ functions $u_c$ also have such a minimum, thanks to (A4). It contradicts $c_0$ being the infimum of $I$, so it must hold $u'_{c_0}(r)<0$. Similar analysis also applies to the case of $c_0=0$. 
	
	As $u_{c_0}$ cannot diverge to any of the infinities, the trichotomy (A6) tells us that $\lim_{r\to\infty}u_{c_0}(r)=0$. We additionally know that $u_{c_0}(r)>0$ and $u'_{c_0}(r)<0$, so $u_{c_0}$ is a positive decreasing function.
\end{proof}

\section{Applications} \label{S:app}
Now we briefly show how one can apply Theorem \ref{T:1} to show the existence of the ground states in the cases of two different semilinear Schr\"{o}dinger equations: (\ref{eqn:SNHt}) and (\ref{eqn:gpt}). As already noted, we will be looking for solutions $u$ with some fixed central value $u(0)=b>0$.

\subsection{Schr\"{o}dinger-Newton-Hooke equation} \label{SS:SNH}

As the ground state of Eq. (\ref{eqn:SNHt}) we understand a stationary solution with both $u$ and $v$ tending to zero at infinity while $u$ stays positive. Such solutions must be spherically symmetric as shown in \cite{Bus00}. It leads to a system of two ODEs for which the shooting method approach may seem problematic at the first glance since we do not know a priori the right value of $v(0)$. It means that in fact there are two shooting parameters: $\omega$ and $v(0)$. Even though there exist methods that may work in the case of such two-dimensional shooting \cite{Has}, it is more convenient to get rid of $\omega$ completely by introducing $h(r)=\omega-v(r)$. As a result, one gets the equivalent system of equations
\begin{align}\label{eqn:SNH}
\begin{cases}
    u''+\frac{d-1}{r} u'-r^2 u+hu=0,\\
    h''+\frac{d-1}{r} h'+u^2=0.
\end{cases}
\end{align}
This formal change of variables can be justified as long as $\lim_{r\to\infty}h(r)$ exists. Fortunately, this is the case as can be seen by rewriting the second line of Eq.\ (\ref{eqn:SNH}) into
\begin{align}\label{eqn:h}
h'(r)=-\frac{1}{r^{d-1}}\int_0^r u(s)^2s^{d-1}\, ds.
\end{align}
For stationary solutions, since $u$ vanishes in infinity, this equation leads to $|h(r)|<A r^2$, where $A$ is some constant. As a result, for large $r$, the harmonic term in Eq.\ (\ref{eqn:SNH}) dominates the nonlinear one and $u$ decays exponentially. Then from Eq.\ (\ref{eqn:h}) one sees that $h$ converges to some finite value as needed. In the end we are left with Eq.\ (\ref{eqn:SNH}) together with the initial conditions $u(0)=b$, $h(0)=c$, and $u'(0)=h'(0)=0$. The analysis of this system, will lead us to the following result:

\begin{proposition} \label{P:SNH}
For any $b>0$ there exists a value of $\omega$ such that system (\ref{eqn:SNHt}) has a ground state with $u(0)=b$.
\end{proposition}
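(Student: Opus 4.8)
The plan is to read the initial value problem for (\ref{eqn:SNH}) with data $u(0)=b$, $h(0)=c$, $u'(0)=h'(0)=0$ as an instance of the abstract setting of Theorem \ref{T:1}, taking the shooting parameter to be $c=h(0)=\omega-v(0)$. For each $c\geq 0$ the system has a unique maximal solution $(u_c,h_c)$, and the classical ODE theory (extended across the regular singular point at $r=0$, as noted after (\ref{eqn:init})) gives that $u_c$ and $u_c'$ depend pointwise-continuously on $c$. It then suffices to verify the six hypotheses (A1)--(A6); Theorem \ref{T:1} produces a $c_0$ for which $u_{c_0}$ is positive on $[0,\infty)$ and decreases to $0$. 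To conclude, I would recover $\omega$ from this solution: the exponential decay of $u_{c_0}$ makes the integral in (\ref{eqn:h}) converge, so $h_{c_0}$ has a finite limit $h_\infty$; setting $\omega:=h_\infty$ and $v:=\omega-h_{c_0}$ gives a potential vanishing at infinity and solving $\Delta v=|\psi|^2$, hence a ground state of (\ref{eqn:SNHt}) with $u(0)=b$.

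Several of the assumptions are quick. For (A5), expanding the first line of (\ref{eqn:SNH}) at the origin and using $u'(0)=0$ (so $u_c'(r)/r\to u_c''(0)$) gives $d\,u_c''(0)+cb=0$, whence $u_c''(0)=-cb/d<0$ for $c>0$. For (A3), observe that for fixed $c$ the $u$-equation is a \emph{linear homogeneous} second-order equation in $u$ once $h$ is treated as known, so $u_c(r_0)=u_c'(r_0)=0$ at some $r_0>0$ would force $u_c\equiv 0$ by uniqueness, contradicting $u_c(0)=b>0$; thus zeros of $u_c$ are transversal. For (A2), note from (\ref{eqn:h}) that $h'\leq 0$ always, so $h_0\leq h_0(0)=0$; then $r^2-h_0\geq 0$, and the identity $(r^{d-1}u_0')'=r^{d-1}(r^2-h_0)u_0$ shows that wherever $u_0>0$ the quantity $r^{d-1}u_0'$ is nondecreasing from $0$, i.e. $u_0'\geq 0$, so a first zero of $u_0$ at $r_1$ would give $u_0(r_1)\geq b>0$, a contradiction; hence $u_0>0$. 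Finally (A6) is the dichotomy already sketched after (\ref{eqn:h}): once $u_c$ keeps one sign and $h_c$ stays bounded, the coefficient $r^2-h_c\to+\infty$ and the solution either decays to $0$ on the recessive branch or diverges to $\pm\infty$.

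The substantive work is in (A1) and (A4). For (A1) I would show that for $c$ large enough $u_c$ crosses zero while staying monotone. Since $h_c(r)\approx c$ on a shrinking interval near the origin, the regular solution of the frozen equation $u''+\tfrac{d-1}{r}u'+c\,u=0$ is a rescaled Bessel function whose first zero lies at $r\sim c^{-1/2}$; a Sturm comparison then forces $u_c$ itself to vanish at some $r_0=O(c^{-1/2})$ for $c$ large. On this short interval one still has $h_c(r)>r^2$, so at any interior critical point the equation gives $u_c''=(r^2-h_c)u_c<0$, excluding an interior minimum; hence $u_c'<0$ on $(0,r_0)$, which is (A1). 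For (A4), suppose $u_c>0$ and $u_c'<0$ at an inflection point $r_i$; then $0=u_c''(r_i)=-\tfrac{d-1}{r_i}u_c'(r_i)+(r_i^2-h_c(r_i))u_c(r_i)$, where the first term is strictly positive, so necessarily $h_c(r_i)>r_i^2$. I would then exploit the monotonicity $h_c'\leq 0$ together with the sign structure of the equation to exclude this configuration throughout the decreasing-positive regime.

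I expect (A4) to be the main obstacle. Unlike the other conditions it is not settled by a one-line sign check: precisely in the region $h_c>r^2$ the focusing term $(r^2-h_c)u_c$ and the friction term $-\tfrac{d-1}{r}u_c'$ carry opposite signs and genuinely compete, so ruling out inflections demands a careful quantitative comparison (or a well-chosen monotone auxiliary functional) in the spirit of the analysis in \cite{Fic21}. Once (A1)--(A6) are established, the proposition follows at once from Theorem \ref{T:1} together with the recovery of $\omega$ described above.
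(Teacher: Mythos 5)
Your treatment of (A4) rests on a misreading of that condition, and the statement you set yourself to prove there is actually false, so the ``careful quantitative comparison'' you defer to future work can never be supplied. You read (A4) as forbidding any point where $u_c''(r_i)=0$ while $u_c(r_i)>0$ and $u_c'(r_i)<0$. But every solution counted in (A1) --- and the ground state itself --- necessarily contains such a point: by your own (A5) computation $u_c''(0)=-cb/d<0$, while at a transversal first zero $r_0$ the first equation of (\ref{eqn:SNH}) gives $u_c''(r_0)=-\tfrac{d-1}{r_0}u_c'(r_0)>0$, so $u_c''$ changes sign somewhere in $(0,r_0)$, i.e.\ precisely in the positive-and-decreasing regime (likewise the ground state is concave near $0$ and convex for large $r$). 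The condition that is actually used in the proof of Theorem \ref{T:1} --- where it serves to guarantee that ``no new stationary point may appear'' as $c$ varies --- concerns \emph{degenerate stationary points}: points $\rho$ with $u_c(\rho)>0$ and $u_c'(\rho)=u_c''(\rho)=0$, since it is only through such a configuration that a stationary point can be created inside the decreasing region. With this reading, (A4) for SNH is a one-line check rather than an obstacle: at such a $\rho$ the equation forces $h_c(\rho)=\rho^2$, and differentiating the equation gives $u_c'''(\rho)=\bigl(2\rho-h_c'(\rho)\bigr)u_c(\rho)\geq 2\rho\,u_c(\rho)>0$ because $h_c'\leq 0$ by (\ref{eqn:h}); hence $u_c'$ has a strict local minimum equal to $0$ at $\rho$, so $u_c$ is strictly increasing on both sides of $\rho$, contradicting that it was decreasing there. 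Until this is repaired, your verification of the hypotheses of Theorem \ref{T:1} is incomplete, and by your own assessment this was the load-bearing step.

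Apart from this, your proposal follows the paper's route: same shooting parameter $c=h(0)=\omega-v(0)$, same scheme of verifying (A1)--(A6), same recovery of $\omega=\lim_{r\to\infty}h_{c_0}(r)$; your (A3) and (A5) are the paper's arguments, your (A6) is a compressed version of the paper's eventual-monotonicity argument (the paper additionally excludes convergence to a nonzero limit via L'H\^{o}pital, which your ``recessive branch'' assertion silently assumes), and your Sturm-comparison proof of (A1) is a reasonable variant of the paper's Bessel-rescaling limit, modulo a mild circularity (keeping $h_c\approx c$ on $[0,O(c^{-1/2})]$ uses $u_c\leq b$, which comes from the monotonicity you establish afterwards; a standard continuation argument fixes this). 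One genuine improvement is your (A2): from (\ref{eqn:h}) one has $h_0\leq h_0(0)=0$, hence $(r^{d-1}u_0')'=r^{d-1}(r^2-h_0)u_0\geq 0$ wherever $u_0>0$, so $u_0$ is nondecreasing and stays $\geq b$. This is simpler than the paper's Pohozaev-type identity and, unlike it, is not restricted to $d\geq 6$.
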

\begin{proof}
In this proof we show that for any $b>0$ solutions to Eq. (\ref{eqn:SNH}) with initial conditions $u(0)=b$, $h(0)=c$, and $u'(0)=h'(0)=0$ form a one-parameter family $\{u_c\}$ that satisfies assumptions (A1)--(A6). A similar proof of this Proposition has been presented in \cite{Fic21}. The main goal here is to recast it into the framework introduced by Theorem \ref{T:1}.

We start by investigating the behaviour of the solutions $u_c$ for large values of $c$. Then it is convenient to introduce the rescaled variables $\widetilde{r}=\sqrt{c}r$, $\widetilde{u}_c(\widetilde{r})=u_c(r)$, and $\widetilde{h}_c(\widetilde{r})=h_c(r)/c$.
\begin{align*}
\begin{cases}
    \widetilde{u}_c''+\frac{d-1}{\widetilde{r}} \widetilde{u}_c'-\frac{\widetilde{r}^2}{c^2} \widetilde{u}_c+\widetilde{h}_c\widetilde{u}_c=0,\\
    \widetilde{h}_c''+\frac{d-1}{\widetilde{r}} \widetilde{h}_c'+\frac{1}{c^2}\widetilde{u}_c^2=0.
\end{cases}
\end{align*}
Taking the limit $c\to\infty$ removes two terms from this system and leaves us with equations that can be explicitly solved: $\widetilde{h}_\infty$ is just equal to $1$, while $\widetilde{u}_\infty$ can be expressed with the Bessel function $J_{\frac{d}{2}-1}$ and oscillates indefinitely with decreasing amplitude. It implies that for large enough values of $c$ the solution $u_c$ is crossing zero and monotonically decreasing beforehand, resulting in (A1). 

To prove that (A2) holds, let us assume otherwise: that $u_0$ crosses zero for the first time at some $R>0$. Then multiplication of the first equation in (\ref{eqn:SNH}) by $u_0(r)r^{d-1}$ and integrating over the interval $[0,R]$ leads to some identity. A similar identity can be obtained by multiplying by $u_0'(r)r^d$ and integrating over the same domain. Another two identities can be obtained in an analogous way from the second equation in (\ref{eqn:SNH}) and combining all four of them yields (see \cite{Fic21} for the details)
\begin{align*}
(d-6)\int_0^R u_0'(r)^2 r^{d-1}\, dr+ (d+2)\int_0^R r^2 u_0(r)^2 r^{d-1}\, dr+2u_0'(R)^2 R^d&\\ +h_0'(R)^2R^d+(d-2)h_0(R)h_0'(R)R^{d-1}&=0.
\end{align*}
This Pohozaev-type identity for $d\geq 6$ (i.e., in critical and supercritical dimensions for SNH) consists of purely positive terms on its left-hand side because $h_0(0)=0$ and $h_c(r)$ is decreasing for any $c$ due to Eq.\ (\ref{eqn:h}). We arrive at a contradiction.

Assumption (A3) clearly holds, while (A4) can be checked by a simple analysis of the system (\ref{eqn:SNH}). Additionally, a proper examination of Eq. (\ref{eqn:SNH}) in the limit $r\to 0$ gives $u_c''(0)=-b c/d$ and proves (A5).

Finally, (A6) can be obtained by observing that since $h_c$ is decreasing, for sufficiently large values of $r$ ($r>\sqrt{c}$), the term $-r^2+h_c(r)$ is negative. When it happens, the first line of Eq.\ (\ref{eqn:SNH}) tells us that $u_c$ cannot have positive maxima, nor negative minima. It means that $u_c(r)$ must be monotone from some point on. Then, if $\lim_{r\to\infty} u_c(r)$ exists, it must be equal to zero because otherwise one can calculate the limit of
\begin{align*}
u_c'(r)=\frac{1}{r^{d-1}}\int_0^r \left[s^2-h_c(s)\right] u_c(s)s^{d-1}\, ds
\end{align*}
as $r\to\infty$ using the L'H\^{o}pital's rule and get $\lim_{r\to\infty}|u_c'(r)|=\infty$. It contradicts the convergence of $u_c$ resulting in trichotomy (A6).

Since all necessary assumptions are satisfied, Theorem \ref{T:1} tells us that there exists such a value $c_0$ that $u_{c_0}$ is a positive solution decaying to zero at infinity. It also is the ground state of the initial problem (\ref{eqn:SNHt}) with frequency that can be restored as $\omega=\lim_{r\to\infty}h_{c_0}(r)$.
\end{proof}

\subsection{Gross-Pitaevskii equation} \label{SS:GP}
In the case of Eq.\ (\ref{eqn:gpt}) the stationary solution ansatz and spherical symmetry assumption (justified by \cite{Li93}) lead to the equation
\begin{align}\label{eqn:gp}
    u''+\frac{d-1}{r} u'- r^2 u+u^3+\omega u=0.
\end{align}
Then one has the following result:
\begin{proposition} \label{P:gp}
For any $b>0$ there exists a value of $\omega$ such that the solution $u$ to Eq. (\ref{eqn:gpt}) with $u(0)=b$ is a ground state.
\end{proposition}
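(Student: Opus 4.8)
The plan is to recast the Gross-Pitaevskii equation (\ref{eqn:gp}) into the initial value problem format of (\ref{eqn:init}) and verify that the resulting one-parameter family satisfies (A1)--(A6), so that Theorem \ref{T:1} applies directly. Here the natural shooting parameter is $\omega$ itself, so I would set $c=\omega$, take $F_c(r,u)=u^3+cu$, and $V(r)=r^2$. The family $\{u_c\}$ consists of solutions to (\ref{eqn:gp}) with $u(0)=b$, $u'(0)=0$; standard ODE theory (as cited via \cite{Has}) gives existence, uniqueness, and the required pointwise-continuous dependence of $u_c$ and $u_c'$ on $c$. Once this family is in hand, the entire proof reduces to checking the six hypotheses, just as in Proposition \ref{P:SNH}, after which Theorem \ref{T:1} produces the ground state with frequency $\omega=c_0$.

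For the large-$c$ behaviour establishing (A1), I expect to rescale as in the SNH proof: setting $\widetilde{r}=\sqrt{c}\,r$ and $\widetilde{u}_c(\widetilde{r})=u_c(r)$, the harmonic term acquires a factor $c^{-2}$ while the cubic term acquires a factor $c^{-1}$, so in the limit $c\to\infty$ one is left with a Bessel-type equation $\widetilde{u}_\infty''+\frac{d-1}{\widetilde{r}}\widetilde{u}_\infty'+\widetilde{u}_\infty=0$ whose solution oscillates with decaying amplitude. Thus for $c$ large enough $u_c$ crosses zero after a monotone decreasing stretch, giving (A1). Assumption (A5) follows from a direct $r\to 0$ expansion of (\ref{eqn:gp}), yielding $u_c''(0)=-(cb+b^3)/d$, which is negative for $c>0$; assumption (A3) is immediate from uniqueness for the ODE; and (A4) follows by noting that at an inflection point $u_c''=0$ the equation forces $r^2-u_c^2-\omega=0$, which cannot persist while $u_c$ is positive and decreasing through the relevant region (a short argument analogous to the SNH case). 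Assumption (A6) should follow from the fact that once $r$ is large enough the coefficient $-r^2+c$ is negative, so $u_c$ can have no positive maxima or negative minima and is eventually monotone; the cubic term only reinforces this because for small $u_c$ it is dominated by the harmonic term, and the same L'H\^{o}pital argument on $u_c'(r)=r^{1-d}\int_0^r(s^2-u_c(s)^2-c)u_c(s)s^{d-1}\,ds$ rules out convergence to a nonzero limit.

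The main obstacle I anticipate is assumption (A2), the positivity of $u_0$, which for SNH required a Pohozaev-type identity valid only in dimensions $d\geq 6$. For GP one must run the analogous Pohozaev computation for the $c=0$ equation $u''+\frac{d-1}{r}u'-r^2u+u^3=0$: multiply by $u_0 r^{d-1}$ and by $u_0' r^d$, integrate over $[0,R]$ where $R$ is a hypothetical first zero, and combine the resulting identities to produce a linear combination of manifestly signed integrals. The cubic nonlinearity $u^3$ contributes a term proportional to $\int_0^R u_0(r)^4 r^{d-1}\,dr$, and the key point will be determining for which dimensions the coefficients all carry a consistent sign, forcing a contradiction. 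I expect this to again single out the critical and supercritical range; identifying the precise dimensional threshold where every term is positive is the delicate step, and it is where the proof genuinely uses the supercriticality rather than merely reproducing subcritical variational results.
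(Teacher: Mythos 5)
Your overall setup (taking $c=\omega$, checking (A1)--(A6), and invoking Theorem \ref{T:1}) is exactly the paper's, and your treatments of (A1), (A3), and (A5) (including $u_c''(0)=-(cb+b^3)/d$) match it; for (A2) the Pohozaev computation you outline is the one the paper delegates to \cite{Biz21}, where the threshold turns out to be $d\geq 4$. The genuine gap is your argument for (A6), and it is precisely the step the paper singles out as the one that does \emph{not} carry over from SNH. You claim that for $r>\sqrt{c}$ the solution can have no positive maxima because $-r^2+c<0$, the cubic term ``only reinforcing'' this when $u_c$ is small. But writing the equation as $u_c''+\frac{d-1}{r}u_c'=u_c\left(r^2-u_c^2-c\right)$, a positive maximum at large $r$ only requires $u_c(r)^2\geq r^2-c$, which is entirely possible for solutions growing like $r$; in the paper's variables $w=u/r$, $t=r^2/2$, these are exactly the solutions approaching $w=\pm 1$, and oscillation of $w$ in the resulting double-well potential is the behaviour that must be excluded. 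Your appeal to smallness of $u_c$ is circular: smallness (decay) is what (A6) is meant to deliver. In SNH the sign argument works because the coefficient $-r^2+h_c(r)$ involves the auxiliary function $h_c$, which is decreasing \emph{independently} of $u_c$; in GP the effective coefficient involves $u_c$ itself. The paper handles this by the substitution $t=r^2/2$, $w=u/r$, interpreting the equation as damped motion in a potential that develops minima at $w=\pm 1$, and argues that $t^{-1}$ damping forbids indefinite oscillation \cite{Smi61} --- and even then it concedes that a strict proof ``would require further work,'' which is why the complete existing proofs of Proposition \ref{P:gp} go through the functional-analytic arguments of \cite{Biz21} and \cite{Sel13}. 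Your L'H\^{o}pital step is fine, but it only rules out convergence to a nonzero limit \emph{after} eventual monotonicity (or the trichotomy itself) is known.

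A secondary flaw: at an inflection point only $u_c''$ vanishes, not $u_c'$, so the equation forces $\frac{d-1}{r}u_c'(r)=u_c(r)\left(r^2-u_c(r)^2-c\right)$, i.e.\ $r^2-u_c(r)^2-c<0$ there, not $r^2-u_c(r)^2-c=0$ as you assert; so your justification of (A4) also needs repair (the paper states (A4) for GP without proof, but your stated reason is not correct). In summary, the proposal reproduces the paper's route for (A1)--(A5), but its proof of the trichotomy (A6) would fail for exactly the solutions that matter, and this is the one point where the GP problem is genuinely harder than SNH.
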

Proofs of this Proposition can be found in \cite{Biz21} and \cite{Sel13}. However, in both of these works the authors need to rely on some functional-analytic methods. Theorem \ref{T:1} suggests a more elementary way of obtaining this result.

For Eq.\ (\ref{eqn:gp}) the frequency $\omega$ can be directly used as the shooting parameter $c$, so let $\omega=c$. Most of the assumptions needed for Theorem \ref{T:1} can be checked in a similar way as for SNH. By considering the variables $\widetilde{r}=\sqrt{c}r$, $\widetilde{u}_c(\widetilde{r}) =u_c(r)$ and then taking the limit $c\to\infty$ in Eq.\ (\ref{eqn:gp}), one can prove (A1). Assumption (A2) can again be obtained with the use of the Pohozaev identity, see \cite{Biz21} for the details, but this time it holds for $d\geq 4$ (critical and supercritical dimensions for GP). One can also very simply get (A3), (A4), and (A5). 

Unfortunately, assumption (A6) cannot be proven as simply as before, when one could just use the monotonicity of $h$. Here we can get a better view by introducing new variables $t=r^2/2$ and $w(t)=u(r)/r$ in which Eq.\ (\ref{eqn:gp}) becomes 
\begin{align*}
    \ddot{w}+\frac{d+2}{2t}\dot{w}+w(w^2-1)+\frac{d-1}{4t^2}w+\frac{\omega}{2t}w=0.
\end{align*}
Dots denote here the derivatives in $t$. This system can be interpreted as a description of the damped motion of a point particle in a potential changing its shape from unimodal with a minimum at $w=0$ to W-like with minimas at $w=\pm 1$ and a maximum at $w=0$. This physical picture suggests that the only possible long-time behaviours of the particle are either confinement in one of the two valleys and settling at $w=\pm 1$ or convergence to the maximum at $w=0$. In particular, since the damping term behaves like $t^{-1}$ it should be impossible for $w$ to oscillate indefinitely \cite{Smi61}. However, the strict proof of this fact would require further work. After going back to the original variables, $w\to\pm 1$ would lead to $u\to\pm\infty$, while $w\to 0$ would give $u\to 0$, implying the trichotomy.

Combination of all these conditions would lead, via Theorem \ref{T:1}, to the existence of $c_0$ such that $u_{c_0}$ is the ground state with frequency $\omega=c_0$.

\section{Conclusions} \label{S:conclusions}
An additional question one can ask regarding the obtained solutions is about their uniqueness, i.e., whether for a fixed $b>0$ there is only one value of the shooting parameter $c$ giving the ground state. At this point, no general method of proving this seems to be available. One must instead refer to the case by case analysis. For example, in the case of SNH the uniqueness of the ground state can be proved by methods presented either in \cite{Cho08} or \cite{Gal11} (the second approach was applied in \cite{Fic21}). However, for GP no similar result exists at this point \cite{Biz21} (even though numerical experiments suggest that the obtained ground states are also unique in this case).

The main advantage of the method presented over the other similar approaches \cite{Biz21, Cho08, Sel13}, is that it can be easily expanded to cover also excited states -- stationary solutions that decay to zero at infinity but are crossing zero. Let us just mention here that such states are not bound to be spherically symmetric, so by reduction to ordinary differential equations some solutions are usually lost. Then, to prove the existence of a solution crossing zero exactly once, one can define a set of shooting parameter values in a similar manner as before
\begin{align*}
    I=\{&c\geq 0\, | \, \exists\, 0<r_0<\rho_1<r_1 : u_c(r_0) = u_c(r_1) = 0 \mbox{ and } u_c'(\rho_1) = 0 \mbox{ while, }\\
    & u_c(r) > 0,\, u_c'(r) < 0 \mbox{ for } r \in (0,r_0), u_c(r) < 0,\, u_c'(r) < 0 \mbox{ for } r \in (r_0,\rho_1),\\
    &\mbox{and } u_c(r) < 0,\, u_c'(r) > 0 \mbox{ for } r \in (\rho_1,r_1)  \}.
\end{align*}
This set is non-empty in the cases we covered here since in the limit $c\to\infty$ the solutions were oscillating. This time one needs some better control on the stationary points of the solutions than was needed for the ground state, in particular regarding the emergence of new stationary points from infinity as $c$ changes. Then it is easy to show that $\inf I$ is the sought value of $c$. This idea can be further generalised to any number of crossings with zero by the appropriate choice of the set $I$. Existence of such a ladder of excited states in the case of SNH was shown using this method in \cite{Fic21}.

One can also look for other systems that can be investigated with this approach. Let us point out that the ideas presented above can be easily applied to a broad range of problems with other trapping potentials (not necessarily harmonic) and simple nonlinearities (for example, $|u|^{p-1}u$ where $p>1$). Some early work suggests that similar methods can also work in the case of systems of elliptic equations, such as considered in \cite{Cla22} but in the presence of some trapping potential.

Finally, this research is just a first step in the broader goal of understanding the dynamics of semilinear Schr\"{o}dinger equations with trapping potentials in supercritical dimensions. Some of the results regarding the dependence of frequency $\omega$ on central value $b$ suggest interesting changes in stability of the ground states in higher dimensions \cite{Biz21, Fic21, Pel}. We plan to pursue this direction in the future work.


\end{document}